\def\01{\{0,1\}}
\newcommand{\ceil}[1]{\lceil{#1}\rceil}
\newcommand{\eps}{\varepsilon}
\newcommand{\rdeg}{\mathrm{rdeg}}
\newcommand{\sign}{\mathrm{sgn}}
\newcommand{\PostQ}{\mathrm{PostQ}}
\newcommand{\ket}[1]{|#1\rangle}
\newcommand{\E}{\mathbb{E}}
\newcommand{\bra}[1]{\langle#1|}
\newcommand{\inp}[2]{\langle{#1}|{#2}\rangle} 
\newcommand{\MAJ}{\mbox{\rm MAJ}}
\newcommand{\maj}{\mbox{\rm maj}}
\newtheorem{theorem}{Theorem}
\newtheorem{lemma}{Lemma}
\newtheorem{corollary}{Corollary}
\newenvironment{proof}
{\noindent {\bf Proof. }}
{{\hfill $\Box$}\\
\smallskip}
\begin{document}

\title{Rational approximations and quantum algorithms with postselection}
\author{Urmila Mahadev\thanks{University of California, Berkeley, urmilamahadev@gmail.com.}
\and
Ronald de Wolf\thanks{CWI and University of Amsterdam, rdewolf@cwi.nl. Partially supported by a Vidi grant from the Netherlands Organization for Scientific Research (NWO), ERC Consolidator Grant QPROGRESS, and the European Commission IST STREP project Quantum Algorithms (QALGO) 600700.}
}
\date{}
\maketitle

\begin{abstract}
We study the close connection between rational functions that approximate a given Boolean function, and quantum algorithms that compute the same function using postselection. We show that the minimal degree of the former equals (up to a factor of~2) the minimal query complexity of the latter.
We give optimal (up to constant factors) quantum algorithms with postselection for the Majority function, slightly improving upon an earlier algorithm of Aaronson. Finally we show how Newman's classic theorem about low-degree rational approximation of the absolute-value function follows from these algorithms.
\end{abstract}

\section{Introduction}

\subsection{Background: low-degree approximations from efficient quantum algorithms}

Since the introduction of quantum computing in the 1980s~\cite{feynman:simulating,deutsch:uqc}, most research in this area has focused on trying to find applications where quantum computers significantly outperform their classical counterparts: new quantum algorithms, quantum cryptography, communication schemes, uses of entanglement etc. One of the more surprising applications of quantum computing in the last decade has been its use, in some way or other, in obtaining results in \emph{classical} computer science and mathematics (see~\cite{drucker&wolf:qproofs} for a survey). One direction here has been the use of quantum query algorithms to show the existence of low-degree polynomial approximations to various functions.  This direction started with the observation~\cite{fortnow&rogers:limitations,bbcmw:polynomialsj} that the acceptance probability of a $T$-query quantum algorithm with $N$-bit input can be written as an $N$-variate multilinear polynomial of degree at most $2T$. For example, Grover's $O(\sqrt{N})$-query algorithm for finding a~1 in an $N$-bit input~\cite{grover:search} implies the existence of an $N$-variate degree-$O(\sqrt{N})$ polynomial that approximates the $N$-bit OR-function, and (by symmetrization) of a univariate polynomial $p$ such that $p(0)=0$ and $p(i)\approx 1$ for all $i\in\{1,\ldots,N\}$.  Accordingly, one way to design (or prove the existence of) a low-degree polynomial with a certain desired behavior, is to design an efficient quantum algorithm whose acceptance probability has that desired behavior.  Results based on this approach include tight bounds on the degree of low-error approximations for symmetric functions~\cite{wolf:degreesymmf}, a new quantum-based proof of Jackson's theorem from approximation theory~\cite{drucker&wolf:jackson}, and tight upper bounds for sign-approximations of formulas~\cite{lee:signdegree}.

\subsection{Quantum algorithms with postselection}

In this paper we focus on a related but slightly more complicated connection, namely the use of quantum query algorithms \emph{with postselection} to show the existence of low-degree \emph{rational} approximations to various functions. We will define both terms in more detail later, but for now let us just state that postselection is the (physically unrealistic) ability of an algorithm to choose the outcome of a measurement, thus forcing a collapse of the state to the corresponding subspace. Postselection allows some functions to be computed much more efficiently. A good example of this is the $N$-bit OR function, which takes value~1 if the input $x\in\01^N$ contains at least one~1, and takes value 0 otherwise. Grover's algorithm takes $O(\sqrt{N})$ queries to compute this, which is known to be optimal (precise understanding of this algorithm and its optimality are not required for this paper).
However, a postselection algorithm could choose a tiny but positive $\eps$ and start with initial state
$$
\eps\ket{0}\ket{1}+\sqrt{\frac{1-\eps^2}{N}}\sum_{i=1}^N\ket{i}\ket{0}.
$$
Making one quantum query to the input gives
$$
\eps\ket{0}\ket{1}+\sqrt{\frac{1-\eps^2}{N}}\sum_{i=1}^N\ket{i}\ket{x_i}.
$$
Now postselect on the last qubit having value~1. This collapses the state to
$$
\eps\ket{0}\ket{1}+\sqrt{\frac{1-\eps^2}{N}}\sum_{i:x_i=1}\ket{i}\ket{1},
$$
times a normalizing constant $1/\sqrt{\eps^2+|x|(1-\eps^2)/N}$.
If $x=0^N$ then the state is simply $\ket{0}\ket{1}$, and measuring the first register gives outcome~0 with certainty. If $x\neq 0^N$, then (assuming $\eps^2\ll 1/N$) measuring the first register will probably give an index~$i$ for which $x_i=1$.  Thus we can compute OR using only one query. The error probability can be made arbitrarily small (though not~0!) by choosing $\eps$ to be very small.

\subsection{Rational functions}

A rational function is the ratio of two polynomials. Its degree is the maximum of the degrees of the numerator and denominator polynomials.  For example, here is a degree-1 rational approximation to OR (again fix small $\eps>0$):
$$
\frac{\sum_{i=1}^N x_i}{\eps+\sum_{i=1}^N x_i}.
$$
This rational function equals 0 if $x=0^N$, and equals essentially~1 if $x\neq 0^N$.  Thus it approximates the OR function very well, using only degree-1 numerator and denominator. Again, the error can be made arbitrarily small (though not~0!) by choosing $\eps$ to be very small. In contrast, a polynomial that approximates OR up to constant error needs degree $\Theta(\sqrt{N})$\cite{nisan&szegedy:degree}.

It is no coincidence that for the OR function both the complexity of postselection algorithms and the rational degree are small. The connection between postselection and rational approximation was first made by Aaronson.  In~\cite{aaronson:pp}, he provided a new proof of the breakthrough result of Beigel et al.~\cite{brs:pp} that the complexity class PP is closed under intersection. He did this in three steps:
\begin{enumerate}
\item Define a new class PostBQP, corresponding to polynomial-time quantum algorithms augmented with postselection.
\item Prove that PP = PostBQP.
\item Observe that PostBQP is closed under intersection, which is obvious from its definition.
\end{enumerate}
While very different from the proof of Beigel et al.\ (at least on the surface), Aaronson noted that his proof could actually be viewed as implicitly constructing certain low-degree rational approximations to the Majority function\footnote{The $N$-bit Majority is the Boolean function defined by $\MAJ_N(x)=1$ iff the Hamming weight $|x|:=\sum_{i=1}^N x_i$ is $\geq N/2$.}; the fact that the resulting polynomial has low degree follows from the fact that Aaronson's algorithm makes only few queries to the input of Majority.  Such rational approximations also form the key to the proof of Beigel et al.

Our goal in this paper is to work out this connection between rational functions and postselection algorithms in much more detail, and to apply it elsewhere.

\subsection{Definitions}

In order to be able to state our results, let us be a bit more precise about definitions. 

\paragraph{Polynomial approximation.}
An $N$-variate polynomial is a function $P:S^N\rightarrow\mathbb{R}$ that can be written as $P(x_1,\ldots,x_N)=\sum_{d_1,\ldots,d_N}c_{d_1,\ldots,d_N}\prod_{i=1}^N x_i^{d_i}$ with real coefficients $c_{d_1,\ldots,d_N}$. In our applications, the domain $S$ of each input variable will be either $\mathbb{R}$ or $\01$.  The \emph{degree} of $P$ is $\deg(P)=\max\{\sum_{i=1}^N d_i\mid c_{d_1,\ldots,d_N}\neq 0\}$.  When we only care about the behavior of the polynomial on the Boolean cube $\01^N$, then $x_i^d=x_i$ for all $d\geq 1$, so then we can restrict to \emph{multilinear} polynomials, where the degree in each variable is at most~1 (and the overall degree is at most~$N$).  Let $\eps\in[0,1/2)$ be some fixed constant. A polynomial $P$ \emph{$\eps$-approximates} $f:S^N\rightarrow\mathbb{R}$ if $|P(x)-f(x)|\leq\eps$ for all $x\in S^N$. The $\eps$-approximate degree of $f$ (abbreviated $\deg_\eps(f)$) is the minimal degree among all such polynomials $P$. The \emph{exact} degree of~$f$ is~$\deg(f)=\deg_0(f)$.

\paragraph{Rational approximation.}
A \emph{rational function} is a ratio $P/Q$ of two $N$-variate polynomials $P,Q:S^N\rightarrow\mathbb{R}$, where $Q$ is required to be nonzero everywhere on $S^N$ to prevent division by~0. Its degree is the maximum of the degrees of $P$ and $Q$.
A rational function $P/Q$ \emph{$\eps$-approximates} $f$ if $|P(x)/Q(x)-f(x)|\leq\eps$ for all $x\in S^N$. The $\eps$-approximate rational degree of $f$ (abbreviated $\rdeg_\eps(f)$) is the minimal degree among all such rational functions. The \emph{exact} rational degree of~$f$ is~$\rdeg_0(f)$.

\paragraph{Quantum query algorithms with postselection.}
A quantum query algorithm \emph{with postselection} (short: postselection algorithm) is a regular quantum query algorithm~\cite{buhrman&wolf:dectreesurvey} with two output bits $a,b\in\01$. We say the postselection algorithm computes a Boolean function $f:\01^N\rightarrow\01$ with error probability~$\eps$ if for every $x\in\01^N$, we have $\Pr[a=1]>0$ and $\Pr[b=f(x)\mid a=1]\geq 1-\eps$.  The idea is that we can compute $f(x)$ with error probability~$\eps$ if we could postselect on measurement outcome $a=1$. In other words, the second output bit~$b$ computes the function when the first is forced to output~1.  This ``forcing'' is the postselection step, which is not something we can actually implement physically; in that respect the model of postselection is mostly a tool for theoretical analysis, not a viable model of computation.  The \emph{postselection query complexity} $\PostQ_\eps(f)$ of $f$ is the minimal query complexity among such algorithms.\footnote{The way we defined it here, a postselection algorithm involves only one postselection-step, namely selecting the value~$a=1$. However, we can also allow intermediate postselection steps without changing the power of this model, see~\cite[Section~4.3]{drucker&wolf:qproofs}.}

\subsection{Our results}\label{ssecourresults}

\paragraph{Rational degree $\approx$ quantum query complexity with postselection.}
Our first result in this paper (Section~\ref{secrdeg=qpost}) is to give a very tight connection between rational approximations of a Boolean function $f:\01^N\rightarrow\01$ and postselection algorithms computing $f$ with small error probability. We show that the minimal degree needed for the former equals the minimal query complexity needed for the latter, to within a factor of~2:
$$
\frac{1}{2}\rdeg_\eps(f)\leq \PostQ_\eps(f) \leq \rdeg_\eps(f).
$$
In other words, minimal rational degree is essentially equal to quantum query complexity with postselection.  The fact that low query complexity of postselection algorithms gives low rational degree has been known since Aaronson's paper~\cite{aaronson:pp}; what we add in this paper is the converse, that low rational degree also gives efficient postselection algorithms. This tight relation (to within a factor of~2) should be contrasted with the better-studied case of polynomial approximation, where the approximate degree $\deg_\eps(f)$ equals the bounded-error quantum query complexity \emph{to within a polynomial factor}~\cite{bbcmw:polynomialsj}, and there are actually polynomial gaps~\cite{ambainis:degreevsquery}.

\paragraph{Optimal postselection algorithm for Majority.}
In his paper, Aaronson~\cite[Theorem~4]{aaronson:pp} implicitly gave an efficient postselection algorithm for the Majority function with polynomially small error probability:
$$
\PostQ_{1/N}(\MAJ_N)=O\left( (\log N)^2 \right).
$$
For constant error probability, one can obtain a postselection algorithm using $O(\log(N)\log\log(N))$ queries from his proof~\cite[Theorem~4.5]{drucker&wolf:qproofs}.

Our second result in this paper is to optimize Aaronson's construction to have minimal query complexity up to a constant factor (and hence the induced rational approximation for majority will have minimal degree), for every error probability $\eps\in(2^{-N},1/2)$:
$$
\PostQ_\eps(\MAJ_N)=O\left( \log(N/\log(1/\eps))\log(1/\eps) \right).
$$
Combined with the above constant-factor equivalence of $\rdeg_\eps(f)$ and $\PostQ_\eps(f)$, this reproves the upper bound of Sherstov~\cite[Theorem~1.7]{sherstov:intersection}.  In fact, we could just have combined Sherstov's upper bound with that equivalence, but our derivation of minimal-degree polynomials by means of a postselection algorithm is very different from Sherstov's proof. Sherstov's matching lower bound for the degree of rational approximations shows that also our algorithm is optimal (up to a constant factor).

\paragraph{Newman's Theorem.}
One of the most celebrated results in rational approximation theory is Newman's Theorem~\cite{newman:approx}. This says that there is a degree-$d$ rational function that approximates the absolute-value function $|x|$ on the interval $x\in[-1,1]$ up to error $2^{-\Omega(\sqrt{d})}$. In contrast, it can be shown that the smallest error achievable by degree-$d$ polynomials is $\Theta(1/d)$. The proof of Newman's Theorem is not extremely complicated:
\begin{quote}
Define $a=e^{-1/\sqrt{d}}$, $p(x)=\prod_{k=0}^{d-1}(a^k + x)$, and degree-$d$ rational function $r(x)=\frac{p(x)-p(-x)}{p(x)+p(-x)}$.\\
Half a page of calculations shows that $r(x)$ $\eps$-approximates the sign-function on the interval $[-1,-\eps]\cup[\eps,1]$, for $\eps=e^{-\Omega(\sqrt{d})}$.  We have $r(x)\in[-1,1]$ and $\sign(x)=\sign(r(x))$ on the whole interval $[-1,1]$, hence the degree-$(d+1)$ rational function $x\cdot r(x)$ $\eps$-approximates the absolute-value function on the whole interval $[-1,1]$.
\end{quote}
In fact the optimal error $\eps$ achievable by degree-$d$ rational functions is known much more precisely~\cite[Theorem~4.2]{petrushev&popov:rational}: it is $\Theta(e^{-\pi\sqrt{d}})$.  The proof of this tighter bound is substantially more complicated.\footnote{In fact, in the 19th century Zolotarev~\cite{zolotarev} already gave the optimal polynomial for each degree~$d$.  Later, Akhiezer~\cite{akhiezer:zolotarev} worked out the asymptotic decrease of the error as a function of~$d$, stating Newman's Theorem much before the paper of Newman (who was apparently unaware of this Russian literature).}

In Section~\ref{secnewman} we show how our postselection algorithm for Majority can be used to derive Newman's Theorem.\footnote{Actually, Aaronson's above-mentioned $O((\log N)^2)$-query postselection algorithm with error $\eps=1/N$ can already be used for this purpose; this application does not require our optimized version of the algorithm.}  While this proof is not easier than Newman's by any reasonable standard, it (like the reproof of Sherstov's result mentioned above) is still interesting because it gives a new, quantum-algorithmic perspective on these known results that may have other applications.

\section{Query complexity with postselection $\approx$ degree of rational approximation}\label{secrdeg=qpost}


We first show that rational approximation degree and quantum query complexity with postselection are essentially the same for all Boolean functions.

\begin{theorem}\label{thrdeg2postq}
For all $\eps\in[0,1/2)$ and $f:\01^N\rightarrow\01$ we have
$\rdeg_\eps(f)\leq 2\PostQ_\eps(f)$.
\end{theorem}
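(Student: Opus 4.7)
The plan is to translate a postselection algorithm directly into a rational function by taking the ratio of two acceptance-probability polynomials. Concretely, let $A$ be a postselection algorithm for $f$ using $T=\PostQ_\eps(f)$ queries, with output bits $a,b$. Define
$$
Q(x)=\Pr_A[a=1]\quad\text{and}\quad P(x)=\Pr_A[a=1,b=1].
$$
Both $P(x)$ and $Q(x)$ are probabilities of particular measurement outcomes of a $T$-query quantum algorithm on input $x$, so by the standard observation of Fortnow--Rogers and Beals et al.\ (\cite{fortnow&rogers:limitations,bbcmw:polynomialsj}, cited in the introduction), each of them is a multilinear polynomial in $x_1,\ldots,x_N$ of degree at most $2T$.

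The key point is that the resulting rational function $P/Q$ is both well defined and approximates $f$. By the definition of a postselection algorithm, $Q(x)=\Pr[a=1]>0$ for every $x\in\01^N$, so $Q$ is nonzero everywhere on the Boolean cube and $P/Q$ qualifies as a rational function under the paper's definition. Moreover,
$$
\frac{P(x)}{Q(x)}=\Pr_A[b=1\mid a=1].
$$
If $f(x)=1$, the algorithm's correctness guarantee gives $P(x)/Q(x)\geq 1-\eps$; if $f(x)=0$, it gives $\Pr[b=0\mid a=1]\geq 1-\eps$, hence $P(x)/Q(x)\leq \eps$. In both cases $|P(x)/Q(x)-f(x)|\leq\eps$, so $P/Q$ is an $\eps$-approximating rational function of degree at most $2T$.

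Combining these observations yields $\rdeg_\eps(f)\leq 2T=2\PostQ_\eps(f)$, which is the claimed bound. There is essentially no hard step: the only thing one has to be careful about is invoking the polynomial representation for the \emph{two} relevant acceptance probabilities (joint and marginal) rather than a single bounded acceptance probability, and checking that the strict positivity requirement $Q(x)>0$ (built into the definition of a postselection algorithm) is precisely what is needed to make $P/Q$ a legal rational function on the Boolean cube.
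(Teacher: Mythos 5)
Your proof is correct and matches the paper's argument essentially step for step: both define $Q(x)=\Pr[a=1]$ and $P(x)=\Pr[a=b=1]$, invoke the Beals et al./Fortnow--Rogers degree-$2T$ polynomial representation of acceptance probabilities, and observe that $P/Q=\Pr[b=1\mid a=1]$ is an $\eps$-approximation of $f$. Your explicit check that $Q(x)>0$ (from the definition of a postselection algorithm) makes $P/Q$ a legitimate rational function is a nice touch that the paper leaves implicit.
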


\begin{proof}
Consider a postselection algorithm for $f$ with $T=\PostQ_\eps(f)$ queries and error $\eps$. Then by~\cite{bbcmw:polynomialsj},  the probabilities $Q(x)=\Pr[a=1]$ and $P(x)=\Pr[a=b=1]$  can be written as polynomials of degree $\leq 2T$. Their ratio $P/Q$ is a rational function that equals the conditional probability $\Pr[b=1\mid a=1]$. By definition, the latter is in $[1-\eps,1]$ for inputs $x\in f^{-1}(1)$, and is in $[0,\eps]$ for $x\in f^{-1}(0)$. Hence $P/Q$ is a rational function of degree $\leq 2T=2\PostQ_\eps(f)$ that $\eps$-approximates~$f$.
\end{proof}

\begin{theorem}
For all $\eps\in[0,1/2)$ and $f:\01^N\rightarrow\01$ we have
$\PostQ_\eps(f)\leq \rdeg_\eps(f)$.
\end{theorem}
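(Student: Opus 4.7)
The plan is to invert the construction of Theorem~\ref{thrdeg2postq}. Given a rational $\eps$-approximation $P/Q$ of degree $d=\rdeg_\eps(f)$, I build a postselection algorithm making at most $d$ queries that computes $f$ with error $\leq\eps$. On the Boolean cube I may take $P,Q$ to be multilinear of degree $\leq d$ (reducing $x_i^2$ to $x_i$). After a mild normalization --- rescaling $P,Q$ by a common positive scalar, and using $f\in\01$ together with $\eps<1/2$ to handle signs and push the rational function into $[0,1]$ without increasing its degree --- I reduce to the case where $p_1:=P$ and $p_0:=Q-P$ are nonnegative multilinear polynomials of degree $\leq d$ with $p_0(x)+p_1(x)=Q(x)\in(0,1]$ pointwise.

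The main technical step is then a converse to the polynomial method of \cite{bbcmw:polynomialsj}: any pair of nonnegative multilinear polynomials $(p_0,p_1)$ of degree $\leq d$ on $\01^N$ with $p_0+p_1\leq 1$ can be realized as joint output probabilities $\Pr[a=1,b=b']=p_{b'}(x)$ of some $d$-query quantum algorithm. Once this is established, $\Pr[a=1]=Q(x)>0$ and $\Pr[b=1\mid a=1]=P(x)/Q(x)$, which is within $\eps$ of $f(x)$; this is the desired postselection algorithm with $d$ queries and error $\leq\eps$. The construction of such an algorithm uses a standard polynomial-as-amplitude routine: expand each $p_{b'}$ in the Fourier basis $\chi_S(x)=\prod_{i\in S}(-1)^{x_i}$; prepare a superposition over subsets $S$ of size $\leq d$ with signed square-root amplitudes encoding the Fourier coefficients; apply $d$ phase queries in parallel across the superposition to produce the $\chi_S(x)$ factors; and uncompute the subset register via the adjoint of a differently-signed state-preparation, so that the amplitude on a designated basis state equals $p_{b'}(x)/\|\hat p_{b'}\|_1$. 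A single control qubit $b$ selects between the $p_0$- and $p_1$-branches while sharing the $d$ oracle queries.

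The main obstacle is twofold. First, probability amplitudes encode polynomials naturally, but measurement probabilities are squared magnitudes, so extracting a polynomial \emph{linearly} into a probability requires an interference trick in which the target polynomial appears as a cross-term when a reference amplitude is combined with the polynomial-encoded amplitude via a Hadamard. Second, naively encoding $p_0$ and $p_1$ separately would use $2d$ queries; this is avoided by running both encodings in superposition on a shared oracle, so that $d$ queries produce both simultaneously. Handling the signs of Fourier coefficients correctly (so that they appear in the final amplitude linearly, rather than in absolute value) requires the asymmetry between a signed state-preparation and an unsigned one in the state-prep/uncompute pair. These are the same kind of interference tricks underlying Aaronson's PostBQP construction; once they are carefully combined, the theorem follows.
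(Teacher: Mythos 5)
Your high-level plan---encode Fourier coefficients into amplitudes, apply $d$ phase queries, interfere, and postselect---matches the paper's construction in spirit. But the specific route you take has two real gaps, and the paper avoids both by a different choice of what to encode and how to read it off.

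\textbf{Gap 1: the normalization step.} You claim to reduce, without increasing the degree, to the case where $p_1:=P$ and $p_0:=Q-P$ are pointwise nonnegative and $Q\in(0,1]$. But the definition of a rational $\eps$-approximation only guarantees that $P(x)/Q(x)\in[-\eps,1+\eps]$; the denominator $Q$ may change sign across the Boolean cube, and then so may $P$. There is no degree-preserving way to force $Q>0$ in general (replacing $P/Q$ by $PQ/Q^2$ doubles the degree). The paper never needs nonnegativity: its error bound is expressed entirely in terms of the sign-invariant quantity $R(x)/Q(x)=1-2P(x)/Q(x)$, so a sign flip of $Q$ is a global phase that does not affect any measurement probability.

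\textbf{Gap 2: the ``converse to the polynomial method.''} The central claim you invoke---that any nonnegative degree-$\le d$ multilinear pair $(p_0,p_1)$ with $p_0+p_1\le 1$ can be realized \emph{exactly} as joint output probabilities $\Pr[a{=}1,b{=}b']=p_{b'}(x)$ of a $d$-query algorithm---is not established, and the interference trick you describe does not deliver it. The state-prep/phase-query/un-prep routine produces an \emph{amplitude} proportional to $p_{b'}(x)$ on a flagged basis state; squaring gives $\propto p_{b'}(x)^2$, and the Hadamard cross-term gives something of the form $\tfrac12\bigl(1+p_{b'}(x)/\|\widehat{p_{b'}}\|_1\bigr)$, which has an additive offset and a scaling that ruin the ratio: the resulting conditional probability $\Pr[b{=}1\mid a{=}1]$ is not $P(x)/Q(x)$. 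You would need to cancel both the quadratic term and the constant term, and the proposal does not explain how.

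\textbf{What the paper does instead.} It places the Fourier coefficients of $Q$ and of $R=Q-2P$ directly (not as signed square roots) as amplitudes in a state $\ket{0}\sum_S\widehat{Q}(S)\ket{S}+\ket{1}\sum_S\widehat{R}(S)\ket{S}$, applies $d$ queries to pick up the phases $(-1)^{x\cdot S}$, Hadamards the $\ket{S}$ register, and postselects on $\ket{0^N}$. This leaves a single qubit proportional to $Q(x)\ket{0}+R(x)\ket{1}$. Crucially, the algorithm does \emph{not} try to read $P/Q$ off as a conditional probability; it simply applies one more Hadamard and measures, which correctly identifies $\mathrm{sgn}\bigl(R(x)/Q(x)\bigr)$ because $R/Q$ is within $2\eps$ of $\pm1$. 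The error probability is $\frac{(1-R/Q)^2}{2(1+(R/Q)^2)}\le\eps$, and this analysis works regardless of signs. In short: keeping the answer in amplitude form and \emph{thresholding} with a final Hadamard is exactly what lets one bypass both the nonnegativity issue and the need for exact linear-in-$p$ probabilities, and that is the piece your proposal is missing.
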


\begin{proof}
Consider a rational function $P/Q$ of degree $d=\rdeg_\eps(f)$ that $\eps$-approximates~$f$. It will be convenient to convert $f$ to a $\pm 1$-valued function. Define $F(x)=1-2f(x)\in\{\pm 1\}$ and $R(x)=Q(x)-2P(x)$, then $R/Q=1-2P/Q$ is in $[-1-2\eps,-1+2\eps]$ if $F(x)=-1$, and in $[1-2\eps,1+2\eps]$ if $F(x)=1$. We will write $R$ and $Q$ in their \emph{Fourier decompositions}:\footnote{The \emph{Fourier coefficients} of a function $g:\01^N\rightarrow\mathbb{R}$ are $\widehat{g}(S)=\frac{1}{2^N}\sum_{x\in\01^N}g(x)(-1)^{x\cdot S}$, where $S\in\01^n$ corresponds to a subset of $[N]$ (i.e., a subset of the $N$ input variables); $x\cdot S$ denotes the inner product between the two $N$-bit strings $x$ and $S$. The Fourier decomposition of $g$ is $g(x)=\sum_{S}\widehat{g}(S)(-1)^{x\cdot S}$.} 
$$
R(x)=\sum_{S\subseteq[N]}\widehat{R}(S)(-1)^{x\cdot S}\mbox{ \ and \ }Q(x)=\sum_{S\subseteq[N]}\widehat{Q}(S)(-1)^{x\cdot S}.
$$
Now set up the following $(N+1)$-qubit state (up to a global normalizing constant):
$$
\ket{0}\sum_S\widehat{Q}(S)\ket{S} + \ket{1}\sum_S\widehat{R}(S)\ket{S},
$$
where $\ket{S}$ is the $N$-bit basis state corresponding to the characteristic vector of~$S$.
Note that $\widehat{R}(S)$ and $\widehat{Q}(S)$ are~0 whenever $|S|>d$.
Hence by making $d$ queries to~$x$, successively querying the indices $i\in S$ and adding their value as a phase~$(-1)^{x_i}$, we can add the phases $(-1)^{x\cdot S}$:
$$
\ket{0}\sum_S\widehat{Q}(S)(-1)^{x\cdot S}\ket{S} + \ket{1}\sum_S\widehat{R}(S)(-1)^{x\cdot S}\ket{S}.
$$
Now a Hadamard transform on each of the $n$ qubits of the second register gives a state proportional to
\begin{align*}
\ket{0}\left(\sum_S\widehat{Q}(S)(-1)^{x\cdot S}\ket{0^N}+\cdots\right) + \ket{1}\left(\sum_S\widehat{R}(S)(-1)^{x\cdot S}\ket{0^N}+\cdots\right) \\
 = \ket{0}\left(Q(x)\ket{0^N}+\cdots\right) + \ket{1}\left(R(x)\ket{0^N}+\cdots\right),
\end{align*}
where the $\cdots$ indicates all the basis states other than $\ket{0^N}$.
Postselect on measuring $\ket{0^N}$ in the second register (more precisely, set the bit~$a$ to~1 only for basis state~$\ket{0^N}$).  What is left in the first register is the following qubit:
$$
\ket{\beta_x}=c(Q(x)\ket{0}+R(x)\ket{1})=cQ(x)\left(\ket{0}+\frac{R(x)}{Q(x)}\ket{1}\right),
$$
where $c=1/\sqrt{Q(x)^2+R(x)^2}$ is a normalizing constant. Since $R(x)/Q(x)\approx F(x)\in\{\pm 1\}$, a Hadamard transform followed by a measurement will with high probability tell us the sign $F(x)$ of $R(x)/Q(x)$. If $F(x)=1$, the error probability equals
$$
|\inp{-}{\beta_x}|^2=\frac{(Q(x)-R(x))^2}{2(Q(x)^2+R(x)^2)}=\frac{(1-R(x)/Q(x))^2}{2(1+(R(x)/Q(x))^2)}\leq\frac{(2\eps)^2}{2(1+(1-2\eps)^2)}=\frac{\eps^2}{1-2\eps+2\eps^2}\leq\eps,
$$
where the last inequality used that $\eps\leq 1-2\eps+2\eps^2$ for all $\eps\in[0,1/2)$. If $F(x)=-1$ then an analogous calculation works. Hence we have found a $d$-query postselection algorithm that computes $f$ with error probability $\leq \eps$.
\end{proof}


\section{An optimal postselection algorithm for Majority}\label{secoptalgomaj}

In this section we give an optimized postselection algorithm for Majority, slightly improving Aaronson's construction.  We will require the following result from~\cite[first paragraphs of proof of Theorem~4]{aaronson:pp}:

\begin{lemma}[Aaronson]
Let $\alpha,\beta>0$ satisfy $\alpha^2+\beta^2=1$. Using one query to input $x\in\01^N$ and postselection, we can construct the following qubit:
\begin{equation}\label{eqpostqubit}
c\left(\alpha |x|\ket{0} + \beta\frac{N - 2|x|}{\sqrt{2}}\ket{1}\right),
\end{equation}
where $c=1/\sqrt{\alpha^2|x|^2 + \frac{\beta^2}{2}(N - 2|x|)^2}$ is a normalizing constant.
\end{lemma}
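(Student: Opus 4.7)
The plan is to prepare a two-branch superposition controlled by an auxiliary ``mode'' qubit $C$ carrying the amplitudes $\alpha,\beta$. In the $\ket{0}_C$ branch we will use the single-query OR-style trick (as in the OR example in the introduction) to make $|x|$ appear as an amplitude, and in the $\ket{1}_C$ branch we will use standard phase kickback to make $N-2|x|=\sum_i(-1)^{x_i}$ appear. A Hadamard transform on the index register followed by postselection on one specific basis outcome will then merge the two amplitudes into a single qubit living on $C$.

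Concretely, I would work with three registers: the mode qubit $C$, an index register $I$ on $n=\log N$ qubits, and a one-qubit query register $Q$. Start from the unit vector
$$
\alpha\,\ket{0}_C\otimes\frac{1}{\sqrt{N}}\sum_{i=1}^N\ket{i}_I\ket{0}_Q \;+\; \beta\,\ket{1}_C\otimes\frac{1}{\sqrt{N}}\sum_{i=1}^N\ket{i}_I\ket{-}_Q,
$$
which is normalized precisely because $\alpha^2+\beta^2=1$. Apply one standard oracle query $\ket{i}\ket{b}\mapsto\ket{i}\ket{b\oplus x_i}$ on $(I,Q)$. The first branch becomes $\frac{1}{\sqrt{N}}\sum_i\ket{i}\ket{x_i}$, while phase kickback turns the second into $\frac{1}{\sqrt{N}}\sum_i(-1)^{x_i}\ket{i}\ket{-}$.

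Next apply $H^{\otimes n}$ on $I$ and postselect on seeing $\ket{0^n}_I\ket{1}_Q$. A direct amplitude computation then gives that the $\ket{0}_C$ branch contributes $\alpha\cdot\frac{1}{N}\sum_i[x_i{=}1]=\alpha|x|/N$ to this outcome, whereas the $\ket{1}_C$ branch contributes $\beta\cdot\frac{1}{N}\bigl(\sum_i(-1)^{x_i}\bigr)\cdot\inp{1}{-}=-\beta(N-2|x|)/(N\sqrt{2})$. Discarding the (now fixed) $I,Q$ registers, applying a single $Z$ gate on $C$ to flip the minus sign, and renormalizing yields exactly the qubit~\eqref{eqpostqubit}, with the normalization constant $c$ stated in the lemma.

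I do not expect a real obstacle here: the algorithm is a coherent superposition of two familiar one-query gadgets. The only subtlety to check is that the postselection has nonzero probability for every $x\in\01^N$, which is needed for the construction to be a valid postselection procedure. This follows from $\alpha,\beta>0$ together with the observation that $|x|$ and $N-2|x|$ cannot both vanish for any $x\in\01^N$, so the unnormalized amplitude on the postselected outcome is strictly nonzero.
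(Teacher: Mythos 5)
Your construction is correct, and it is a genuinely different decomposition from the one Aaronson (and the paper) uses. The paper first produces, from one query plus an intermediate postselection on $\ket{0^n}_I$, the single qubit proportional to $(N-s)\ket{0}+s\ket{1}$ (with $s=|x|$); only \emph{then} does it bring in the mode qubit $\alpha\ket{0}+\beta\ket{1}$, apply a \emph{controlled} Hadamard to the result qubit, and postselect a second time on that qubit being~$1$. In effect, the paper generates $N-2s$ by Hadamarding the already-prepared result qubit. Your version instead puts the $(\alpha,\beta)$ superposition in at the very start, and uses two different one-query gadgets in the two branches: the XOR-to-register gadget (giving $s$) and the phase-kickback gadget on $\ket{-}_Q$ (giving $N-2s$). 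Because the oracle acts only on $(I,Q)$ and the branches share the same $(I,Q)$ circuit, a single oracle call serves both. A single combined postselection on $\ket{0^n}_I\ket{1}_Q$, followed by the $Z$ to absorb the relative minus sign coming from $\bracket{1}{-}=-1/\sqrt 2$, then yields exactly the target qubit and its normalization.

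What each buys: the paper's version is more modular (one clean intermediate object, $(N-s)\ket{0}+s\ket{1}$, which it reuses conceptually throughout the Majority algorithm), at the price of two postselection steps and a controlled-Hadamard. Your version is a tighter, single-postselection circuit that makes the ``superposition of two familiar gadgets'' structure explicit and dispenses with the controlled gate. The two are essentially the same circuit commuted through the Hadamard on the query register, but the conceptual packaging is different. Your final remark (that the postselection succeeds with nonzero probability for every $x$, because $|x|$ and $N-2|x|$ cannot both be zero when $N\geq 1$) is a correct and worthwhile sanity check that the paper leaves implicit. One cosmetic note: the $Z$ can be avoided altogether by initializing the $\ket{0}_C$ branch with $\ket{1}_Q$ instead of $\ket{0}_Q$ and postselecting on $\ket{0}_Q$, but your version is perfectly fine.
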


For the sake of being self-contained, we repeat Aaronson's proof below.

\medskip

\begin{proof}
Assume for simplicity that $N$ is a power of~2, so $N=2^n$ and we can identify the indices $i\in[N]$ with $n$-bit strings. Let $s=|x|$.
Start with $(n+1)$-qubit state~$\ket{0^{n+1}}$, and apply Hadamard transforms to the first $n$ qubits and then one query to~$x$, to obtain
$$
\frac{1}{\sqrt{N}}\sum_{i\in\01^n}\ket{i}\ket{x_i}.
$$
Again apply Hadamard transforms to the first $n$ qubits, and postselect on the first $n$ qubits being all-0. Up to a normalizing constant, the last qubit will now be in state
$$
\ket{\psi}=(N-s)\ket{0}+s\ket{1}.
$$
Add a new qubit prepared in state $\alpha\ket{0}+\beta\ket{1}$ to (the left of) this qubit~$\ket{\psi}$. Conditioned on this new qubit, apply a Hadamard transform to $\ket{\psi}$, giving
\begin{eqnarray*}
\alpha\ket{0}\ket{\psi}+\beta\ket{1}H\ket{\psi} & = & \alpha\ket{0}\left((N-s)\ket{0}+s\ket{1}\right) + \beta\ket{1}\left(\frac{N}{\sqrt{2}}\ket{0}+\frac{N-2s}{\sqrt{2}}\ket{1}\right)\\
 & = & \left(\alpha (N-s)\ket{0}+\beta\frac{N}{\sqrt{2}}\ket{1}\right)\ket{0} + \left(\alpha s\ket{0}+\beta\frac{N-2s}{\sqrt{2}}\ket{1}\right)\ket{1}.
\end{eqnarray*}
If we now postselect on the last qubit being~1, the first qubit collapses to the state promised in the lemma.
\end{proof}

Our goal is to decide whether $|x|\geq N/2$ or not.
Consider the qubit of Eq.~(\ref{eqpostqubit}).
If $0<|x|<N/2$ then this qubit is strictly inside the first quadrant (i.e., both $\ket{0}$ and $\ket{1}$ have positive amplitude), and if $|x|\geq N/2$ then it is not.
In the first case, for some choice of $\alpha,\beta$ the qubit will be close to the state $\ket{+} = \frac{1}{\sqrt{2}}(\ket{0} + \ket{1})$,
while in the second case it will be far from $\ket{+}$ for \emph{every} choice of $\alpha,\beta$.
The algorithm tries out a number of $(\alpha,\beta)$-pairs in order to distinguish between these two cases.
Let $t$ be some positive integer (which we will later set to $\ceil{\log(2/\eps)}$ for our main algorithm).
Let 
$$
A = \{-\ceil{\log(N/t)},\ldots,-1,0,1,\ldots,\ceil{\log(N/t)}\},
$$ 
and for all $i\in A$ let $\ket{a_i}$ be the qubit of Eq.~(\ref{eqpostqubit}) with $\frac{\alpha}{\beta} = 2^i$.
Let 
$$
B = \{0,\dots,t-1\}\cup\{N/2-t+1,\ldots,N/2-1\}
$$ 
if $t\geq 2$, and $B = \emptyset$ otherwise. For all $i\in B$ let $\ket{b_i}$ be the qubit of Eq.~(\ref{eqpostqubit}) with $\frac{\alpha}{\beta} = \frac{N - 2i}{\sqrt{2}i}$. Note that $\ket{b_{|x|}}=\ket{+}$.

The intuition of the algorithm is that we are trying to eliminate from $A$ and $B$ all $i$ corresponding to states whose squared inner product with $\ket{+}$ is at most 1/2. If $|x|\geq N/2$ (i.e., $\MAJ_N(x)=1$) then we expect to eventually eliminate all~$i$, while if $|x|<N/2$ (i.e., $\MAJ_N(x)=0$) then for at least one~$i$, the squared inner product with $\ket{+}$ will be close to~1, and this~$i$ will probably not be eliminated by the process. 
We start with a procedure that tries to eliminate the elements of $A$:

\begin{lemma}\label{lemeliminateA}
For every integer $t\in\{1,\ldots,N/4\}$ there exists a postselection algorithm that uses $O(\log(N/t))$ queries to its input $x\in\01^N$ and distinguishes (with success probability $\geq 2/3$) the case $|x|\in\{t,\ldots,N/2-t\}$ from the case $|x|\geq N/2$.
\end{lemma}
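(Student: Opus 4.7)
The plan is to combine a structural dichotomy for the qubits $\ket{a_i}$ with a carefully designed postselection that converts the natural deterministic test into one achieving conditional success probability $\ge 2/3$ against both middle-range and majority inputs.

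First I would pin down the dichotomy. Writing $p_i := |\langle +|a_i\rangle|^2$ and $\ket{a_i}=c(\alpha_i|x|\ket{0}+\beta_i\frac{N-2|x|}{\sqrt 2}\ket{1})$, a direct expansion yields $p_i-1/2 = \alpha_i\beta_i \, |x|(N-2|x|)/(\sqrt 2 c^{-2})$, so $\mathrm{sgn}(p_i-1/2) = \mathrm{sgn}(N-2|x|)$ uniformly in $i$. In particular, $p_i \le 1/2$ for every $i\in A$ whenever $|x|\ge N/2$. Conversely, for $|x|\in\{t,\dots,N/2-t\}$ with $t\le N/4$, the ratio $(N-2|x|)/(\sqrt 2|x|)$ lies in $[2^{-\lceil\log(N/t)\rceil},2^{\lceil\log(N/t)\rceil}]$, so some $2^{i^*}$ with $i^*\in A$ falls within a factor of two of it; a short calculation then gives $p_{i^*}\ge 9/10$.

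The algorithm itself prepares $\ket{a_i}$ for each $i\in A$ using one query plus the inner postselection of Aaronson's lemma, totalling $|A|=O(\log(N/t))$ queries; it then applies Hadamard to each qubit and measures to obtain independent bits $m_i$ with $\Pr[m_i=0]=p_i$. The natural decision rule "$b=0$ iff some $m_i=0$, else $b=1$" already gives $\Pr[b=0]\ge 9/10$ in the middle-range case by the $i^*$ component alone, so all remaining work is on the majority side, where the event ``all $m_i=1$'' has probability $\prod_i(1-p_i)$, which is as small as $2^{-|A|}$ at the worst-case boundary $|x|=N/2$ (there every $\ket{a_i}$ collapses to $\ket{0}$ and the $m_i$'s become i.i.d.\ fair coins). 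I would use the postselection bit $a$ to re-weight the two possible values of $b$ — for instance by accepting every "$b=1$" run but rejecting most "$b=0$" runs with a carefully tuned probability — and then argue that conditional on $a=1$ the correct answer has probability $\ge 2/3$ for every input in each range.

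The main obstacle is calibrating this postselection to work uniformly across \emph{all} majority inputs, not merely at the boundary. A crude i.i.d.\ reweighting of the two branches cannot simultaneously satisfy both conditional-error constraints, because in the middle case the ratio of branch probabilities is $O(1)$ while in the majority boundary case it is exponential in $|A|$. The resolution is to exploit the fact that postselection amplifies rare events at no additional query cost, e.g.\ by interleaving intermediate postselection steps inside the sequential preparation of the $\ket{a_i}$'s so that ``all $m_i=1$'' is amplified relative to the mixed outcomes by precisely the factor needed to absorb the $2^{-|A|}$ boundary-case penalty, and then verifying via a direct calculation that the resulting conditional error is $\le 1/3$ for every majority $x$ and every middle-range $x$ simultaneously.
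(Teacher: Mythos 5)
Your proposal takes a route quite different from the paper's: you prepare one copy of each $\ket{a_i}$, measure once, set $b=0$ iff some outcome is $\ket{-}$, and then calibrate a postselection reweighting; the paper instead runs an iterative elimination with an adaptive query budget (round $k$ uses $5k$ copies per surviving index, discards indices whose measurement majority is $\ket{-}$, and stops after $180\log(N/t)$ queries). Both routes are plausible, but as written your proposal has a genuine gap, and it is created by a claim that contradicts your own first paragraph.

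In the third paragraph you assert that a crude i.i.d.\ reweighting cannot work because ``in the middle case the ratio of branch probabilities is $O(1)$ while in the majority boundary case it is exponential in $|A|$.'' This is false. Your own derivation $\mathrm{sgn}(p_i - 1/2) = \mathrm{sgn}(N-2|x|)$ holds for \emph{every} $i\in A$, so in the middle range you have $p_i>1/2$ for \emph{all} $i$, not just $i^*$. Hence $\Pr[b=1]=\prod_{i\in A}(1-p_i) < 2^{-|A|}$ there too, and combining with $1-p_{i^*}\le 1-\lambda^2$ (with $\lambda^2\approx 0.97$ as in the paper, or $\ge 9/10$ by your estimate) gives $\Pr[b=1]\le (1-\lambda^2)\,2^{-(|A|-1)}$, i.e.\ smaller than the boundary value $2^{-|A|}$ by a constant factor, not by an exponential one. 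Consequently a single fixed acceptance probability $q=c\cdot 2^{-|A|}$ on the $b=0$ branch, with $c$ a suitable absolute constant (roughly any $c\in(0.12,0.5)$ works with $\lambda^2\approx0.97$; with your weaker bound $p_{i^*}\ge 9/10$ the window narrows but still exists), satisfies both conditional-error constraints simultaneously, uniformly over all $|x|$ in the two ranges --- the boundary $|x|=N/2$ is the worst majority input and the generic middle input is handled by the $\prod_i(1-p_i)$ bound. Because you misdiagnosed the obstacle, you rejected exactly the fix that works and pivoted to a vague ``interleaved postselection'' scheme that is never made precise and comes with no verification that the conditional error is at most $1/3$. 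As it stands this is not a proof; to complete your route, drop the interleaving idea, fix $q=c\cdot 2^{-|A|}$, and carry out the short calculation of $\Pr[b=0\mid a=1]$ and $\Pr[b=1\mid a=1]$ sketched above (noting that every preparation postselection has positive success probability since $|x|\ge t\ge1$).
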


\begin{proof}
The algorithm is as follows:
\begin{enumerate}
\item Initialize $k=1$ and $A_1 = A$. 
\item Repeat the following until $180\log(N/t)$ queries have been used (or until $A_k$ is empty):
\begin{enumerate}
\item For all $i\in A_k$:\\ 
create $5k$ copies of $\ket{a_i}$ and measure each in the $\ket{+},\ket{-}$ basis;\\ 
set $M_{k,i}=1$ if this resulted in a majority of $\ket{+}$ outcomes, and set $M_{k,i}=0$ otherwise.
\item Set $A_{k+1}=\{i\in A_k\mid M_{k,i}=1\}$.
Set $k$ to $k + 1$. 
\end{enumerate}
\item Output 0 if the final $A_k$ is nonempty, and output 1 otherwise. 
\end{enumerate}
Clearly the query complexity is $O(\log(N/t))$. We now analyze what happens in both cases.

{\bf Case 1: $|x|\in\{t,\ldots,N/2-t\}$.}
For these values of $|x|$, the ratio between $|x|$ and $N-2|x|$ lies between $t/N$ and $N/t$. Hence there exists an $i\in A$ such that $\ket{a_i}$ and $\ket{a_{i+1}}$ lie on opposite sides of $\ket{+}$. In the worst case, $\ket{+}$ lies exactly in the middle between $\ket{a_i}$ and $\ket{a_{i+1}}$, in which case $\inp{+}{a_i} = \inp{+}{a_{i+1}}$. In this case, $\ket{a_i} = \sqrt{\frac{1}{3}}\ket{0} + \sqrt{\frac{2}{3}}\ket{1}$, so $\inp{+}{a_i} = \frac{1 + \sqrt{2}}{\sqrt{6}} =: \lambda$. We will show that this~$i$ is likely to remain in all sets~$A_k$, in which case the algorithm outputs the correct answer~0.

Each iteration of step 2 will be called a ``trial''. Let $m$ be the number of the trial being executed when the algorithm stops (this~$m$ is a random variable).  The algorithm gives the correct output~0 iff $A_m$ is nonempty. First, by a Chernoff bound\footnote{\label{footnotechernoff}For $K$ coin flips $X_1,\ldots,X_K$, each taking value~1 with probability $p$, the probability that their sum $\sum_{i=1}^K X_i$ is at most $K(p-\eps)$, is upper bounded by $\exp(-2K\eps^2)$.  See for example~\cite[Appendix~A]{alon&spencer:probmethod}. We apply this here with $K=5k$, $p=\lambda^2\approx 0.97$, and $\eps=p-1/2$.} for every~$k$
$$
\Pr[M_{k,i} = 0] \leq \exp\left(-2\cdot 5k(\lambda^2 - 1/2)^2\right)\leq 2^{-(k+2)}.
$$ 
Now by the union bound, the error probability in this case is
$$
\Pr[A_m=\emptyset]\leq \Pr[i\notin A_m]=\Pr[\exists~k~s.t.~M_{k,i} = 0] \leq \sum_{k=1}^{\infty} 2^{-(k+2)}=\frac{1}{4}. 
$$

{\bf Case 2: $|x|\geq N/2$.}
We first show that the algorithm is likely to go through at least $\log N$ trials.
Since $|x|\geq N/2$, for all $i\in A$ we have $|\inp{+}{a_i}|^2\leq\frac{1}{2}$ and hence $\Pr[M_{k,i} = 1]\leq\frac{1}{2}$ for all~$k$. Therefore
$$
\E[|A_{k+1}|] = \sum_{i\in A} \prod_{\ell=1}^{k}\Pr[M_{\ell,i} = 1]
\leq \frac{|A|}{2^k} \leq \frac{\log(N/t)}{2^{k-1}}.
$$ 
Let $Q=\sum_{k=1}^{\log N} 5k|A_k|$ be the number of queries used in the first $\log N$ trials (with the number of queries set to~0 for the non-executed trials after the $m$th). Now: 
$$
\E[Q] \leq 5\log(N/t)\sum_{k=1}^{\log N} \frac{k}{2^{k-1}} \leq 20\log(N/t),
$$
where we used
$$
\sum_{k=1}^{\infty} \frac{k}{2^{k-1}}=\sum_{k=1}^\infty \sum_{\ell=k}^\infty \frac{1}{2^{\ell-1}}=4\sum_{k=1}^\infty 2^{-k}\sum_{\ell=1}^\infty \frac{1}{2^\ell}=4\sum_{k=1}^\infty 2^{-k}=4.
$$
By Markov's inequality 
$$
\Pr[Q\geq 180\log(N/t)]\leq\Pr[Q\geq 9\E[Q]]\leq\frac{1}{9}.
$$
So with probability at least $\frac{8}{9}$ we have $Q<180\log(N/t)$, meaning the algorithm executes at least $\log N$ trials before it terminates. In that case each element of $A$ has probability at most $1/2^{\log N}=1/N$ to survive $\log N$ trials. Hence, by the union bound
$$
\Pr[A_{2\log N + 1}\neq\emptyset]\leq \frac{|A|}{N} \leq\frac{1}{4},
$$
for $N$ sufficiently large.
Therefore the final error probability is at most $\frac{8}{9}\frac{1}{4} + \frac{1}{9}= \frac{1}{3}$ in this case. 
\end{proof}

Note that if we set $t=1$ in this lemma then we obtain an $O(\log N)$-query postselection algorithm that computes $\MAJ_N$ with error probability $\leq 1/3$ for all $x\neq 0^N$ (we can ensure $x\neq 0^N$ for instance by fixing the first two bits of $x$ to $01$, so then we would be effectively computing $\MAJ_{N-2}$). This improves upon the $O(\log(N)\log\log(N))$ algorithm mentioned in Section~\ref{ssecourresults}.  

We can reduce the error probability to any $\eps\in(0,1/2)$ by the standard method of running the algorithm $O(\log(1/\eps))$ times and taking the majority value among the outputs. This gives an $\eps$-error algorithm using $O(\log(N)\log(1/\eps))$ queries.  However, a slightly more efficient algorithm is possible if we set $t=\ceil{\log(2/\eps)}$ and separately handle the inputs with $|x|\notin\{t,\ldots,N/2-t\}$.

\begin{lemma}\label{lemeliminateB}
For every integer $t\in\{2,\ldots,N/4\}$ there exists a postselection algorithm that uses $O(t)$ queries to its input $x\in\01^N$ and distinguishes (with success probability $\geq 1-2^{-t}$) the case $|x|\in\{0,\ldots,t-1\}\cup\{N/2-t+1,\ldots,N/2-1\}$ from the case $|x|\geq N/2$.
\end{lemma}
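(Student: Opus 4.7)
The plan is to run an iterative elimination scheme on the set $B$, analogous to Lemma~\ref{lemeliminateA}, but exploiting a key asymmetry of the states $\ket{b_i}$: when $|x|\in B$ there is a single ``target'' index $i=|x|$ at which $\ket{b_i}=\ket{+}$ and measurements in the $\ket{+},\ket{-}$ basis yield $\ket{+}$ deterministically, whereas when $|x|\geq N/2$ no such target exists.  The first step is to verify that, for $|x|\geq N/2$ and every $i\in B$, one has $|\inp{+}{b_i}|^2\leq 1/2$: writing $\ket{b_i}$ (up to normalization) as $(N-2i)|x|\ket{0}+i(N-2|x|)\ket{1}$, the coefficient of $\ket{0}$ is nonnegative while that of $\ket{1}$ is nonpositive (since $i<N/2$ and $|x|\geq N/2$), so the cross term in the identity $(a+b)^2/(2(a^2+b^2))=\tfrac{1}{2}+ab/(a^2+b^2)$ is $\leq 0$ and the bound follows.

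The algorithm itself is then simple.  Initialize $B_1=B$, and in round $k$, for each $i\in B_k$, prepare one fresh copy of $\ket{b_i}$ (one query), measure it in the $\ket{+},\ket{-}$ basis, and let $B_{k+1}$ consist of those $i$ whose outcome was $\ket{+}$.  Stop as soon as $B_k=\emptyset$ or the total query count first reaches $Ct$ for a sufficiently large absolute constant $C$; output~$1$ iff the final $B_k$ is empty.  The query bound $O(t)$ holds by construction.  For Case~1 ($|x|\in B$), $\ket{b_{|x|}}=\ket{+}$ always returns $\ket{+}$, so $|x|$ stays in every $B_k$, the budget is exhausted, and the algorithm deterministically outputs~$0$.  (The degenerate case $|x|=0$, where $\ket{b_0}$ is ill-defined, can be handled by first ruling out $x=0^N$, e.g.\ via the $t=1$ version of Lemma~\ref{lemeliminateA}.)

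The main analytic step, and also the main obstacle, is Case~2 ($|x|\geq N/2$).  Independently across $i\in B$, each round's measurement yields $\ket{+}$ with probability at most $1/2$, so the number of measurements $R_i$ made on index~$i$ before its elimination is stochastically dominated by a geometric random variable with parameter~$1/2$.  The total query count equals $\sum_{i\in B}R_i$, a sum of at most $2t$ independent geometric random variables of mean~$\leq 2$.  I plan to apply a standard Chernoff-type tail inequality for sums of geometric random variables---or, equivalently, view the procedure as counting ``failures'' in a stream of independent coin flips (each head with probability $\geq 1/2$) and apply Hoeffding's inequality to the event that a run of $Ct$ flips contains fewer than $|B|$ heads---to conclude that, for a sufficiently large absolute constant~$C$, $\Pr\bigl[\sum_{i\in B}R_i>Ct\bigr]\leq 2^{-t}$.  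On the complementary event, $B_k$ is emptied within the budget and the algorithm outputs~$1$, giving the required error bound $2^{-t}$.
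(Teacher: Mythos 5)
Your proof is correct and follows essentially the same approach as the paper: iteratively measure the candidate states $\ket{b_i}$ in the $\ket{+},\ket{-}$ basis, eliminate indices on $\ket{-}$ outcomes, and bound the Case-2 failure probability via a Chernoff/Hoeffding tail on the number of eliminations achievable within an $O(t)$-query budget (the paper runs $8t$ measurements sequentially on the first surviving index and observes that a nonempty final $B$ implies fewer than $2t-1$ ``heads'' among $8t$ fair coin flips). The remaining differences---your round-robin rather than sequential processing of $B$, your explicit verification that $|\inp{+}{b_i}|^2\leq 1/2$, and your remark on the $|x|=0$ edge case---are cosmetic or supplementary to what the paper leaves implicit.
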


\begin{proof}
The algorithm is as follows:
\begin{enumerate}
\item Initialize $B=\{0,\dots,t-1\}\cup\{N/2-t+1,\ldots,N/2-1\}$
\item Repeat the following $8t$ times (or until $B$ is empty):\\
take the first $i\in B$, create one copy of $\ket{b_i}$ and measure it in the $\ket{+},\ket{-}$ basis;\\  
if the outcome was $\ket{-}$ then remove $i$ from $B$.
\item Output 0 if the final $B$ is nonempty, and output 1 otherwise. 
\end{enumerate}
Clearly the query complexity is $O(t)$. We now analyze what happens in both cases.

{\bf Case 1: $|x|\in\{0,\ldots,t-1\}\cup\{N/2-t+1,\ldots,N/2-1\}$.}
Because $\ket{b_{|x|}}=\ket{+}$, the index $i=|x|$ will remain in $B$ with certainty.

{\bf Case 2: $|x|\geq N/2$.}
In this case, for all $i$ in the initial set $B$ we have $|\inp{+}{b_i}|^2\leq\frac{1}{2}$. Hence each measurement has probability at least 1/2 of producing outcome $\ket{-}$ and reducing the size of $B$ by~1. Since $B$ initially has $2t-1$ elements, it will only end up nonempty if there are fewer than $2t-1$ $\ket{-}$ outcomes among all $8t$ measurements. The probability of this event is upper bounded by the probability of $<2t-1$ ``heads'' among $K=8t$ fair coin flips. By the Chernoff bound (see footnote~\ref{footnotechernoff}, with $p=1/2$ and $\eps=1/4$), that probability is at most
$\exp(-2K(1/2-1/4)^2)=\exp(-t)\leq 2^{-t}$.
\end{proof}

To obtain our main algorithm we set $t=\ceil{\log(2/\eps)}$.
If $\eps\leq 2^{-\Omega(N)}$ then the trivial algorithm that queries all $N$ bits to determine Majority will be optimal up to a constant factor, so below we may assume $t\leq N/4$.
We now run the algorithm of Lemma~\ref{lemeliminateA} with error reduced to $\eps/2$, and the algorithm of Lemma~\ref{lemeliminateB} (with error $\leq 2^{-t}\leq \eps/2$), and we output 1 if both algorithms outputted~1. It is easy to see that this computes Majority with error probability $\leq\eps$ on every input. This proves:

\begin{theorem}\label{th:majalgo}
For every $\eps\in(2^{-N},1/2)$ there exists a postselection algorithm that computes $\MAJ_N$ using $O\left(\log(N/\log(1/\eps))\cdot \log(1/\eps)\right)$ queries with error probability $\leq\eps$.
\end{theorem}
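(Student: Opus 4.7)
The plan is to directly follow the recipe laid out in the paragraph preceding the theorem: combine the two subroutines from Lemmas~\ref{lemeliminateA} and~\ref{lemeliminateB} with a careful choice of the parameter $t$ that balances their two cost contributions.

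First I would set $t=\lceil\log(2/\eps)\rceil$, so that $t=\Theta(\log(1/\eps))$. There are two regimes. If $\eps$ is so small that $t>N/4$, then $\log(1/\eps)=\Theta(N)$ and the trivial deterministic algorithm that simply queries all $N$ bits and outputs the correct answer gives $O(N)=O(\log(N/\log(1/\eps))\cdot\log(1/\eps))$ queries with zero error; this case is immediate. Otherwise $t\leq N/4$, and both Lemmas~\ref{lemeliminateA} and~\ref{lemeliminateB} apply.

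Next I would amplify the constant-error algorithm of Lemma~\ref{lemeliminateA} to error $\eps/2$ by the standard trick of running it $O(\log(1/\eps))$ times independently and taking the majority vote of its outputs. This yields a postselection algorithm, call it $\mathcal{A}$, using $O(\log(N/t)\cdot\log(1/\eps))=O(\log(N/\log(1/\eps))\cdot\log(1/\eps))$ queries that distinguishes $|x|\in\{t,\dots,N/2-t\}$ from $|x|\geq N/2$ with error $\leq\eps/2$. Lemma~\ref{lemeliminateB} already gives an algorithm $\mathcal{B}$ with error $\leq 2^{-t}\leq\eps/2$ that distinguishes $|x|\in\{0,\dots,t-1\}\cup\{N/2-t+1,\dots,N/2-1\}$ from $|x|\geq N/2$, and it uses only $O(t)=O(\log(1/\eps))$ queries, which is absorbed into the first bound. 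The final algorithm runs both $\mathcal{A}$ and $\mathcal{B}$ and outputs $1$ iff both output $1$.

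For correctness I would do a case split on $|x|$. If $|x|\geq N/2$ (so $\MAJ_N(x)=1$) then both $\mathcal{A}$ and $\mathcal{B}$ output $1$ except with probability $\eps/2$ each, so by a union bound the combined algorithm outputs $1$ with probability $\geq 1-\eps$. If $|x|<N/2$ then $|x|$ lies in exactly one of the two ranges covered by $\mathcal{A}$ and $\mathcal{B}$, and the corresponding algorithm outputs $0$ with probability $\geq 1-\eps/2\geq 1-\eps$, so the conjunction also outputs $0$. There is no real obstacle here: the lemmas do all the heavy lifting, and the only slightly subtle point is choosing $t=\Theta(\log(1/\eps))$ so that the amplified cost $O(\log(N/t)\log(1/\eps))$ of $\mathcal{A}$ exactly matches the failure-rate $2^{-t}$ of $\mathcal{B}$ while keeping $\mathcal{B}$'s query cost $O(t)$ subdominant.
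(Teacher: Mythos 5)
Your proof is correct and follows the same route as the paper: set $t=\lceil\log(2/\eps)\rceil$, amplify Lemma~\ref{lemeliminateA} to error $\eps/2$, combine with Lemma~\ref{lemeliminateB} via a conjunction of outputs, and dispose of the tiny-$\eps$ regime by the trivial $N$-query algorithm. You actually supply more detail than the paper (which leaves the case analysis as ``easy to see''); the only soft spot, inherited from the paper itself, is that the identification $O(N)=O(\log(N/\log(1/\eps))\cdot\log(1/\eps))$ in the $t>N/4$ regime silently assumes $\log(N/\log(1/\eps))$ is bounded away from zero, which fails as $\eps$ approaches $2^{-N}$.
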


The latter algorithm is asymptotically better than the earlier $O(\log(N)\log(1/\eps))$ algorithm if $\eps$ is slightly bigger than $2^{-N}$.  For example, if $\eps=2^{-N/\log N}$ then the earlier algorithm has query complexity $O(N)$ while Theorem~\ref{th:majalgo} gives $O(N\log\log(N) / \log(N))=o(N)$.

Sherstov~\cite[Theorem~1.7]{sherstov:intersection} proved an $\Omega(\log(N/\log(1/\eps))\cdot \log(1/\eps))$ lower bound on the degree of $\eps$-approximating rational functions for $\MAJ_N$, for all $\eps\in(2^{-N},1/2)$.  Together with our Theorem~\ref{thrdeg2postq}, this shows that the algorithm of Theorem~\ref{th:majalgo} has optimal query complexity up to a constant factor.

\section{Deriving Newman's Theorem}\label{secnewman}

We now use the postselection algorithm for Majority to derive a good, low-degree rational approximation for the sign-function:

\begin{theorem}
For every $d$ there exists a degree-$d$ rational function that $\eps$-approximates the sign-function $\sign(z)$ on $[-1,-\eps]\cup[\eps,1]$ for $\eps = 2^{-\Omega(\sqrt{d})}$ (and which lies in $[-1,1]$ for all $z\in[-1,1]$).
\end{theorem}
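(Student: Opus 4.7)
The plan is to take the optimal $\MAJ_N$ postselection algorithm of Theorem~\ref{th:majalgo} and substitute each Boolean input by the real value $p=(1+z)/2$ in its acceptance polynomials, producing a univariate rational approximation of $\sign$.

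Fix a small constant $c>0$, set $\eps = 2^{-c\sqrt{d}}$, and take $N = \Theta(\log^3(1/\eps)/\eps^2)$. Theorem~\ref{th:majalgo} then yields a postselection algorithm for $\MAJ_N$ with error $\leq\eps$ using $T = O(\log(N/\log(1/\eps))\log(1/\eps)) = O(\log^2(1/\eps))$ queries, so $2T\leq d$ for $c$ small enough. By the proof of Theorem~\ref{thrdeg2postq}, the acceptance probabilities $P(x):=\Pr[a=b=1]$ and $Q(x):=\Pr[a=1]$ are multilinear polynomials of degree $\leq 2T$ in $x_1,\ldots,x_N$, and $P/Q$ is an $\eps$-approximation of $\MAJ_N$ on $\01^N$. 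Extending $P,Q$ multilinearly to $[0,1]^N$ and defining $\tilde P(z):=P(p,\ldots,p)$, $\tilde Q(z):=Q(p,\ldots,p)$ yields polynomials in $z$ of degree $\leq 2T\leq d$, and the candidate approximation is
$$
r(z) := 2\tilde P(z)/\tilde Q(z) - 1.
$$

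The approximation argument would be probabilistic. By multilinearity, $\tilde P(z)=\E_{x\sim B(N,p)}[P(x)]$ and $\tilde Q(z)=\E_{x\sim B(N,p)}[Q(x)]$, so $\tilde P/\tilde Q=\Pr[b=1\mid a=1]$ in the experiment where $x$ is drawn from a product Bernoulli distribution with bias $p$. This conditional probability lies in $[0,1]$, forcing $r(z)\in[-1,1]$ for every $z\in[-1,1]$. For $z\in[\eps,1]$, a Chernoff bound gives $\Pr[|x|<N/2]\leq e^{-\Omega(\eps^2 N)}$; combining this with the pointwise guarantee $P(x)\geq(1-\eps)Q(x)$ whenever $|x|\geq N/2$ gives
$$
\tilde P(z)/\tilde Q(z)\;\geq\;(1-\eps)\Bigl(1 - e^{-\Omega(\eps^2 N)}/\tilde Q(z)\Bigr).
$$

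The main obstacle will be obtaining a lower bound on $\tilde Q(z)$ strong enough to swallow the Chernoff error. The plan here is to inspect Aaronson's construction directly: each preparation of $\ket{a_i}$ uses two postselection steps whose success probabilities are at least $1/2$ and $\Omega((t/N)^2)$ respectively, and the whole MAJ algorithm performs $O(T)$ such preparations, so $Q(x)\geq q_{\min}=2^{-O(T\log(N/t))}=2^{-O(\log^3(1/\eps))}$ for every Boolean $x$, hence $\tilde Q(z)\geq q_{\min}$ throughout $[-1,1]$. The choice $N=\Theta(\log^3(1/\eps)/\eps^2)$ then ensures $e^{-\Omega(\eps^2 N)}\leq \eps q_{\min}$, giving $r(z)\geq 1-O(\eps)$ on $[\eps,1]$; a symmetric argument handles $[-1,-\eps]$. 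Adjusting the constant $c$ absorbs the $O(\cdot)$ factors and yields the claimed $\eps = 2^{-\Omega(\sqrt d)}$.
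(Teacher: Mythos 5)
Your proposal is a \emph{genuinely different} route from the paper's. The paper exploits the fact that the acceptance polynomials $P,Q$ of the MAJ postselection algorithm are symmetric in the $x_i$'s, hence depend only on $|x|$; it rewrites them as univariate polynomials in $|x|$, substitutes a real value $z$, and argues (with $N\approx 2/\eps$) that the algorithm's analysis — the bounds on the overlaps $|\langle+|a_i\rangle|^2$, which are rational and continuous in $|x|$ — goes through verbatim when the integer Hamming weight is replaced by a real $z\in[1,N/2-1]\cup[N/2,N]$. You instead evaluate the multilinear extensions at the diagonal point $(p,\ldots,p)$ with $p=(1+z)/2$, interpret the ratio $\tilde P/\tilde Q$ as the conditional acceptance probability when $x$ is drawn from $\mathrm{Bernoulli}(p)^{\otimes N}$, and then argue concentration via Chernoff. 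This has two nice features the paper obtains only implicitly: the containment $r(z)\in[-1,1]$ falls out immediately from the conditional-probability interpretation, and no symmetry of the algorithm is used. In exchange you pay two prices: you must take $N=\Theta(\log^3(1/\eps)/\eps^2)$ rather than $N\approx 2/\eps$ (which is harmless since the degree $O(\log N\log(1/\eps))$ is still $O(\log^2(1/\eps))$), and you need a pointwise lower bound $Q(x)\geq q_{\min}$ — which is exactly what your probabilistic weighting cannot avoid, since the bad inputs $|x|<N/2$ that Chernoff pushes into the exponentially small tail could in principle carry all of the $Q$-mass.

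That lower bound on $Q$ is the place where your sketch would need the most work to become a full proof, and you correctly flag it as the main obstacle. The claim ``each preparation has postselection probability $\geq 1/2$ and $\Omega((t/N)^2)$'' needs to be threaded carefully through the \emph{adaptive} structure of the algorithm (which states $\ket{a_i}$ are prepared depends on earlier measurement outcomes; when one compiles the intermediate postselections into the single flag $a=1$, the bound must be shown to hold conditioned on all earlier flags and measurements). This does work — each Aaronson-lemma call uses fresh ancillas and the $\Omega(\min(\alpha^2,\beta^2))$ bound on the second postselection is uniform over the choice of $i$ — but it requires unpacking details of the algorithm that the paper's symmetrization argument never touches. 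Also a small imprecision: for some of the $\ket{a_i}$ (those with $i$ near $\pm\ceil{\log(N/t)}$) and especially for the $\ket{b_i}$ with small $i$ the postselection probability is only $\Omega(1/N^2)$, not $\Omega((t/N)^2)$; since $t$ is polylogarithmic this changes nothing asymptotically, but the stated bound is not quite right as written.
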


\begin{proof}
Set $\eps = 2^{-\Omega(\sqrt{d})}$ with a sufficiently small constant in the $\Omega(\cdot)$, and $N = \ceil{\frac{2}{\eps}}$. Consider the algorithm described after Lemma~\ref{lemeliminateA} with $t=1$ and error reduced to $\eps/2$. It provides two $N$-variate multilinear polynomials $P$ and $Q$, each of degree $d=O(\log(N)\log(1/\eps))=O(\log(1/\eps)^2)$, such that for all $x\in\01^N$,
$$
\left|\frac{P(x)}{Q(x)} - \MAJ_N(x)\right|\leq\frac{\eps}{2}.
$$
Note that $P$ can be written as $\sum_{j} c_j(\sum_i x_i)^j$, as can $Q$, because the amplitudes of the states $\ket{a_i}$ and $\ket{b_i}$ in the proof of Theorem~\ref{th:majalgo} are functions of~$|x|=\sum_i x_i$. To convert $P$ to a univariate polynomial $p$, replace $\sum_i x_i$ with real variable~$z$ to obtain $p(z) = \sum_j c_jz^j$. Similarly convert $Q(x)$ to $q(z)$. Let $\maj_N$ represent the univariate version of $\MAJ_N$: $\maj_N$ returns~0 on input $x\in[0,\dots,\frac{N}{2})$ and returns~1 on $x\in[N/2,\dots,N]$. We now have:
$$
\left|\frac{p(x)}{q(x)} - \maj_N(x)\right|\leq\frac{\eps}{2}
$$ 
for $x\in\{0,\dots,N\}$. Crucially, this inequality also holds for real values $z\in [1,N/2 - 1]\cup [N/2,N]$. This is because the analysis of the algorithm described after Lemma~\ref{lemeliminateA} (with $t=1$ and error reduced to $\eps/2$) still works when we replace the integer $|x|$ with real value~$z$. Since $\sign(z) = 2\maj_N(\frac{N(z+1)}{2}) - 1$, we have
$$
\left|\frac{2p\left(\frac{N(z+1)}{2}\right)- q\left(\frac{N(z+1)}{2}\right)}{q\left(\frac{N(z+1)}{2}\right)} - \sign(z) \right|\leq\eps
$$
for all $z\in [-1,-\frac{2}{N}] \cup [0,1]$. Since $N = \ceil{\frac{2}{\eps}}$, we have the desired approximation on $[-1,-\eps]\cup[\eps,1]$.
\end{proof}

It is easy to see that multiplying the above rational function by $z$ gives an approximation of the absolute-value function $|z|$ on the whole interval $z\in [-1,1]$. Thus we have reproved Newman's Theorem in a new, quantum-based way:

\begin{corollary}[Newman]
For every integer $d\geq 1$ there exists a degree-$d$ rational function that approximates $|z|$ on $[-1,1]$ with error $\leq 2^{-\Omega(\sqrt{d})}$.
\end{corollary}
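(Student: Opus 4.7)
The plan is to derive the corollary directly from the preceding theorem by multiplying the sign-approximation by $z$, with only a tiny amount of extra work to control the behavior on the small interval near the origin where the sign-approximation is not guaranteed to be accurate.

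First I would invoke the preceding theorem with degree $d-1$ in place of $d$, obtaining a degree-$(d-1)$ rational function $r(z)$ such that $|r(z)-\sign(z)|\leq\eps$ on $[-1,-\eps]\cup[\eps,1]$, with $r(z)\in[-1,1]$ for all $z\in[-1,1]$, where $\eps=2^{-\Omega(\sqrt{d-1})}=2^{-\Omega(\sqrt{d})}$. I would then define the candidate approximation $s(z):=z\cdot r(z)$, which is a rational function of degree at most $d$.

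Next I would check that $s$ approximates $|z|$ on $[-1,1]$ by splitting into two cases. For $z$ with $|z|\geq\eps$, using $|z|=z\cdot\sign(z)$ gives $|s(z)-|z||=|z|\cdot|r(z)-\sign(z)|\leq 1\cdot\eps=\eps$. For $z$ with $|z|<\eps$, the triangle inequality gives $|s(z)-|z||\leq|z|\cdot|r(z)|+|z|\leq 2|z|<2\eps$, using that $|r(z)|\leq 1$ on the whole interval. So on all of $[-1,1]$ we have $|s(z)-|z||\leq 2\eps=2^{-\Omega(\sqrt{d})}$, absorbing the constant factor into the $\Omega(\cdot)$.

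There is no real obstacle here since the previous theorem has done all of the actual work: the quantum-algorithmic sign-approximation is handed to us, and the only point requiring any thought is the small window $|z|<\eps$, where the sign function itself is poorly behaved but where the target value $|z|$ is also tiny, so a trivial bound suffices. The step that gave us the sign-approximation on $[-1,-\eps]\cup[\eps,1]$, together with the containment $r([-1,1])\subseteq[-1,1]$ that came from the postselection-algorithm construction, is precisely what makes this reduction painless.
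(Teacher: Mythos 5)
Your proposal is correct and follows exactly the same route as the paper: take the degree-$(d-1)$ sign-approximant $r$ from the preceding theorem, multiply by $z$, and use $|r|\leq 1$ to control the window $|z|<\eps$ where the sign-approximation gives no guarantee. The paper states this derivation in a single sentence (``multiplying the above rational function by $z$ gives an approximation of $|z|$ on $[-1,1]$''); you have simply filled in the two-case bookkeeping.
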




\section{Open questions}

We mention a few open questions.
First, we have very few techniques for quantum algorithms with post\-selection.  Aaronson's techniques from~\cite{aaronson:pp} (and our variations thereof) is the main technique we know that makes non-trivial use of the power of postselection.  What other algorithmic tricks can we play using postselection?
Using the equivalence between postselection algorithms and rational degree, we can try to obtain new algorithms from known rational approximations.
Very tight bounds are known for the rational degree of approximations of the univariate exponential functions $\exp(x)$ and $\exp(-x)$~\cite[Sections~4.4 and~4.5]{petrushev&popov:rational}.  In particular, rational degree $d$ is necessary and sufficient to achieve approximation-error $\exp(-\Theta(d))$ for the function $\exp(-x)$ on the interval $[0,\infty)$. This implies the following for postselection algorithms. Consider the real-valued $n$-bit function $f:\01^n\rightarrow\mathbb{R}$ defined by $f(x)=\exp(-|x|)$.  Then for every integer~$d>0$ there exists a quantum algorithm with postselection, that makes $O(d)$ queries to its input~$x\in\01^n$, and whose acceptance probability is within $\exp(-d)$ of $f(x)$. Can we use such a postselection algorithm to compute something useful?

Second, we showed here how a classical but basic theorem in rational approximation theory (Newman's theorem) could be reproved based on efficient quantum algorithms with postselection.  Is it possible to prove \emph{new} results in rational approximation theory using such algorithms?

Finally, the following is a long-standing open question attributed to Fortnow by Nisan and Szegedy~\cite[p.~312]{nisan&szegedy:degree}: is there a polynomial relation between the \emph{exact} rational degree of a Boolean function $f:\01^N\rightarrow\01$ and its usual polynomial degree? It is known that exact and bounded-error quantum query complexity and exact and bounded-error polynomial degree are all polynomially close to each other~\cite{buhrman&wolf:dectreesurvey}, so rephrased in our framework Fortnow's question is equivalent to the following: can we efficiently simulate an \emph{exact} quantum algorithm with postselection by a bounded-error quantum algorithm without postselection?\footnote{Note that we are asking about \emph{exact} rational degree here; for $\eps$-approximate rational degree the Majority function gives an example of an exponential gap between rational degree and the usual polynomial degree.}  
We hope this more algorithmic perspective will help answer his question.

\bigskip

\noindent
{\bf Acknowledgment.} We thank Andr\'e Chailloux for helpful discussions, and Sushant Sachdeva for asking us about rational approximations of exponential functions. We also thank the anonymous QIC referees for many helpful comments.

\bibliographystyle{alpha}

\begin{thebibliography}{BBC{\etalchar{+}}01}

\bibitem[Aar05]{aaronson:pp}
S.~Aaronson.
\newblock Quantum computing, postselection, and probabilistic polynomial-time.
\newblock In {\em Proceedings of the Royal Society}, volume A461(2063), pages
  3473--3482, 2005.
\newblock quant-ph/0412187.

\bibitem[Akh29]{akhiezer:zolotarev}
N.~Akhiezer.
\newblock On a problem of {E. I. Zolotarev}.
\newblock {\em Izv. Akad. Nauk SSSR}, 10:919--931, 1929.

\bibitem[Amb03]{ambainis:degreevsquery}
A.~Ambainis.
\newblock Polynomial degree {vs.} quantum query complexity.
\newblock In {\em Proceedings of 44th IEEE FOCS}, pages 230--239, 2003.
\newblock quant-ph/0305028.

\bibitem[AS08]{alon&spencer:probmethod}
N.~Alon and J.~H. Spencer.
\newblock {\em The Probabilistic Method}.
\newblock Wiley-Interscience, third edition, 2008.

\bibitem[BBC{\etalchar{+}}01]{bbcmw:polynomialsj}
R.~Beals, H.~Buhrman, R.~Cleve, M.~Mosca, and R.~{de} Wolf.
\newblock Quantum lower bounds by polynomials.
\newblock {\em Journal of the ACM}, 48(4):778--797, 2001.
\newblock Earlier version in FOCS'98. quant-ph/9802049.

\bibitem[BRS95]{brs:pp}
R.~Beigel, N.~Reingold, and D.~Spielman.
\newblock {PP} is closed under intersection.
\newblock {\em Journal of Computer and System Sciences}, 50(2):191--202, 1995.
\newblock Earlier version in STOC'91.

\bibitem[BW02]{buhrman&wolf:dectreesurvey}
H.~Buhrman and R.~{de} Wolf.
\newblock Complexity measures and decision tree complexity: A survey.
\newblock {\em Theoretical Computer Science}, 288(1):21--43, 2002.

\bibitem[Deu85]{deutsch:uqc}
D.~Deutsch.
\newblock Quantum theory, the {Church-Turing} principle, and the universal
  quantum {T}uring machine.
\newblock In {\em Proceedings of the Royal Society of London}, volume A400,
  pages 97--117, 1985.

\bibitem[DW11a]{drucker&wolf:qproofs}
A.~Drucker and R.~{de} Wolf.
\newblock Quantum proofs for classical theorems.
\newblock {\em Theory of Computing}, 2011.
\newblock ToC Library, Graduate Surveys 2.

\bibitem[DW11b]{drucker&wolf:jackson}
A.~Drucker and R.~{de} Wolf.
\newblock Uniform approximation by (quantum) polynomials.
\newblock {\em Quantum Information and Computation}, 11(3\&4):215--225, 2011.
\newblock arxiv/1008.1599.

\bibitem[Fey82]{feynman:simulating}
R.~Feynman.
\newblock Simulating physics with computers.
\newblock {\em International Journal of Theoretical Physics}, 21(6/7):467--488,
  1982.

\bibitem[FR99]{fortnow&rogers:limitations}
L.~Fortnow and J.~Rogers.
\newblock Complexity limitations on quantum computation.
\newblock {\em Journal of Computer and System Sciences}, 59(2):240--252, 1999.
\newblock Earlier version in Complexity'98. Also cs.CC/9811023.

\bibitem[Gro96]{grover:search}
L.~K. Grover.
\newblock A fast quantum mechanical algorithm for database search.
\newblock In {\em Proceedings of 28th ACM STOC}, pages 212--219, 1996.
\newblock quant-ph/9605043.

\bibitem[Lee09]{lee:signdegree}
T.~Lee.
\newblock A note on the sign degree of formulas, 2009.
\newblock arxiv/0909.4607.

\bibitem[New64]{newman:approx}
D.~Newman.
\newblock Rational approximations to $|x|$.
\newblock {\em Michigan Mathematical Journal}, 11(1):11--14, 1964.

\bibitem[NS94]{nisan&szegedy:degree}
N.~Nisan and M.~Szegedy.
\newblock On the degree of {B}oolean functions as real polynomials.
\newblock {\em Computational Complexity}, 4(4):301--313, 1994.
\newblock Earlier version in STOC'92.

\bibitem[PP87]{petrushev&popov:rational}
P.~P. Petrushev and V.~A. Popov.
\newblock {\em Rational Approximation of Real Functions}.
\newblock Cambridge University Press, 1987.

\bibitem[She13]{sherstov:intersection}
A.~Sherstov.
\newblock The intersection of two halfspaces has high threshold degree.
\newblock {\em SIAM Journal on Computing}, 42(6):2329--2374, 2013.
\newblock Earlier version in FOCS'09.

\bibitem[Wol08]{wolf:degreesymmf}
R.~{de} Wolf.
\newblock A note on quantum algorithms and the minimal degree of
  {$\varepsilon$}-error polynomials for symmetric functions.
\newblock {\em Quantum Information and Computation}, 8(10):943--950, 2008.
\newblock quant-ph/0802.1816.

\bibitem[Zol77]{zolotarev}
E.~Zolotarev.
\newblock Application of the elliptic functions to the problems on the
  functions of the least and most deviation from zero ({R}ussian).
\newblock {\em Zapiski Rossijskoi Akad. Nauk}, 1877.

\end{thebibliography}

\newcommand{\etalchar}[1]{$^{#1}$}

\end{document}